\newcommand{\QAOA}{\text{QAOA}^2}
\newcommand{\bz}{\bm{z}}
\newcommand{\bx}{\bm{x}}
\newcommand{\bs}{\bm{s}}
\newcommand{\bgamma}{\bm{\gamma}}
\newcommand{\bbeta}{\bm{\beta}}
\newcommand{\bbx}{\bar{\bm{x}}}
\newcommand{\bW}{W}
\newtheorem{thm}{Theorem}
\begin{document}

\title{QAOA-in-QAOA: solving large-scale MaxCut problems on small quantum machines}

\author{Zeqiao Zhou}
\thanks{This work was done when he was a research intern at JD Explore Academy, zhouzeqiao@mail.ustc.edu.cn}
\affiliation{Department of Electronic Engineering and Information Science, University of Science and Technology of China, Hefei, Anhui, China, 230027}
\affiliation{JD Explore Academy, Beijing 101111, China}

\author{Yuxuan Du}
\thanks{Corresponding author, duyuxuan123@gmail.com}
\affiliation{JD Explore Academy, Beijing 101111, China}

\author{Xinmei Tian}
\thanks{Corresponding author, xinmei@ustc.edu.cn}
\affiliation{Department of Electronic Engineering and Information Science, University of Science and Technology of China, Hefei, Anhui, China, 230027}
 
\author{Dacheng Tao}
\thanks{Corresponding author, dacheng.tao@sydney.edu.au}
\affiliation{JD Explore Academy, Beijing 101111, China}

\date{\today}

\begin{abstract}
The design of fast algorithms for combinatorial optimization greatly contributes to a plethora of domains such as logistics, finance, and chemistry.  Quantum approximate optimization algorithms (QAOAs), which utilize the power of quantum machines and inherit the spirit of adiabatic evolution, are novel approaches to tackle combinatorial problems with potential runtime speedups. However, hurdled by the limited quantum resources nowadays, QAOAs are infeasible to manipulate large-scale problems. To address this issue, here we revisit the MaxCut problem via the divide-and-conquer heuristic: seek the solutions of subgraphs in parallel and then merge these solutions to obtain the global solution. Due to the $\mathbb{Z}_2$ symmetry in MaxCut, we prove that the merging process can be further cast into a new MaxCut problem and thus be addressed by QAOAs or other MaxCut solvers. With this regard, we propose QAOA-in-QAOA ($\QAOA$) to solve arbitrary large-scale MaxCut problems using small quantum machines. We also prove that the approximation ratio of $\QAOA$ is lower bounded by $1/2$. Experiment results illustrate that under different graph settings, $\QAOA$ attains a competitive or even better performance over the best known classical algorithms when the node count is around $2000$. Our method can be seamlessly embedded into other advanced strategies to enhance the capability of QAOAs in large-scale combinatorial optimization problems.
\end{abstract}

\maketitle

\section{Introduction}
Combinatorial optimization \cite{korteCombinatorialOptimizationTheory2006}, which aims to search for maxima or minima of an objective function with discrete solution space, is an indispensable tool in various application domains such as portfolio investment, vehicle routing and transportation \cite{juarnaCombinatorialAlgorithmsPortfolio2017,sbihiCombinatorialOptimizationGreen2007}. Driven by its fundamental importance, huge efforts have been dedicated to devising efficient algorithms for combinatorial problems during past decades. Representative examples include using semidefinite programming techniques to approximate the solution of MaxCut and maximum 2-satisfiability problems \cite{goemans879approximationAlgorithmsMAX1994,goemansImprovedApproximationAlgorithms1995},  adopting the simulated annealing methods to solve constrained problems   \cite{kirkpatrickOptimizationSimulatedAnnealing1983},  and exploiting other heuristics such as expert knowledge and the structure of studied problems to solve traveling salesman problems \cite{johnsonLocalOptimizationTraveling1990,jungerChapterTravelingSalesman1995,regoTravelingSalesmanProblem2011}. Recently, attributed to the power of neural networks, deep learning techniques have also been employed in solving combinatorial optimization problems  \cite{bengioMachineLearningCombinatorial2020,mazyavkinaReinforcementLearningCombinatorial2021,cappartCombinatorialOptimizationReasoning2021}.  Nevertheless, due to the intrinsic hardness of most combinatorial  problems \cite{karpReducibilityCombinatorialProblems1972}, existing methods request expensive computational overhead to estimate the optimal solution and thus it is highly demanded to design more powerful algorithms to speed up the optimization.

Quantum computers have the ability to efficiently solve certain problems that are computationally hard for classical computers \cite{shorAlgorithmsQuantumComputation1994}. This superiority could be preserved for noisy intermediate-scale quantum (NISQ) machines~\cite{preskillQuantumComputingNISQ2018}, because both theoretical and experimental studies have exhibited their runtime merits over the classical counterparts for certain tasks   \cite{zhongQuantumComputationalAdvantage2020,aruteQuantumSupremacyUsing2019,wuStrongQuantumComputational2021}. For this reason, there is a growing interest in designing NISQ algorithms with computational merits. Variational quantum algorithms (VQAs) \cite{cerezoVariationalQuantumAlgorithms2021}, which consist of parameterized quantum circuits \cite{benedettiParameterizedQuantumCircuits2019} and classical optimizers to adequately leverage  accessible quantum resources and suppress the system noise, are leading solutions to achieve this goal. Notably, initial studies have exhibited that quantum approximate optimization algorithms (QAOAs)~\cite{edwardfarhiQuantumApproximateOptimization2014}, as one crucial paradigm of VQAs, can be used to tackle combinatorial optimization with potential computational advantages \cite{farhiQuantumSupremacyQuantum2019a,guerreschiQAOAMaxCutRequires2019}. The underlying principle of QAOAs is mapping a quadratic unconstrained binary optimization (QUBO) problem describing the explored combinatorial problem to a Hamiltonian whose ground state refers to the optimal solution   \cite{lucasIsingFormulationsMany2014,gloverTutorialFormulatingUsing2019}. In this way,  various manipulable quantum systems can be used to advance combinatorial problems \cite{hamerlyExperimentalInvestigationPerformance2019,paganoQuantumApproximateOptimization2020,harriganQuantumApproximateOptimization2021a}.

Envisioned by the promising prospects, both empirical and theoretical studies have been carried out to understand the foundation of QAOAs and improve their performance. One critical line of research is unveiling the connection between adiabatic quantum computation   \cite{farhiQuantumComputationAdiabatic2000,farhiQuantumAdiabaticEvolution2001} and QAOAs and showing that QAOAs can be seen as a parameterized Trotterization of adiabatic evolution~\cite{zhouQuantumApproximateOptimization2020,edwardfarhiQuantumApproximateOptimization2014,wurtzCounterdiabaticityQuantumApproximate2021}.  Making use of this relation, the parameter  initialization of QAOAs can be simplified associated with an improved performance  \cite{brandaoFixedControlParameters2018,yuQuantumApproximateOptimization2021,wurtzCounterdiabaticityQuantumApproximate2021}. In parallel to explore the initialization strategy, another crucial topic is designing advanced training strategies of QAOA to avoid local optima and accelerate optimization. Concrete examples include modifying the objective functions \cite{barkoutsosImprovingVariationalQuantum2020}, applying the iterative training strategy \cite{bravyiObstaclesStatePreparation2020,camposTrainingSaturationLayerwise2021}, and using adaptive mixing operators \cite{zhuAdaptiveQuantumApproximate2020,yuQuantumApproximateOptimization2021,hadfieldQuantumApproximateOptimization2019,wangXYMixersAnalytical2020}. Despite the remarkable achievements, little progress has been made in overcoming the scalability issue of QAOAs, whereas the ultimate goal of the most advanced QAOA is solving a problem with hundreds of vertices \cite{ebadiQuantumOptimizationMaximum2022}. The main challenges come from the fact that manipulating a graph with $n$-nodes requires $O(n)$ qubits but the most advanced quantum machines nowadays can only provide a very limited number of qubits with $n\approx 100$. Moreover, due to the a high-level noise and barren plateaus, QAOAs may suffer from the trainability issue for the large $n$  \cite{lotshawScalingQuantumApproximate2022,mccleanBarrenPlateausQuantum2018,leeProgressFavorableLandscapes2021,du2021learnability,zhang2021toward}, which degrade their performance~\cite{harriganQuantumApproximateOptimization2021a,marshall2020characterizing}. Although an initial attempt of the scalable QAOAs has been addressed by   \cite{liLargescaleQuantumApproximate2021,akshay2021parameter}, their approach encounters the sample complexity issue \footnote{The approach  proposed by  \cite{liLargescaleQuantumApproximate2021} breaks one graph into two subgraphs sharing common nodes. To sample a good candidate solution, the local solution of these common nodes should be exactly overlapped. In this respect, the sample complexity of their approach grows with number of common nodes which makes it harder to sample a good candidate solution.}. To this end, it still remains obscure whether QAOAs can outperform classical approaches towards large-scale combinatorial problems. 

\begin{figure}[t]
\includegraphics[width=0.48\textwidth]{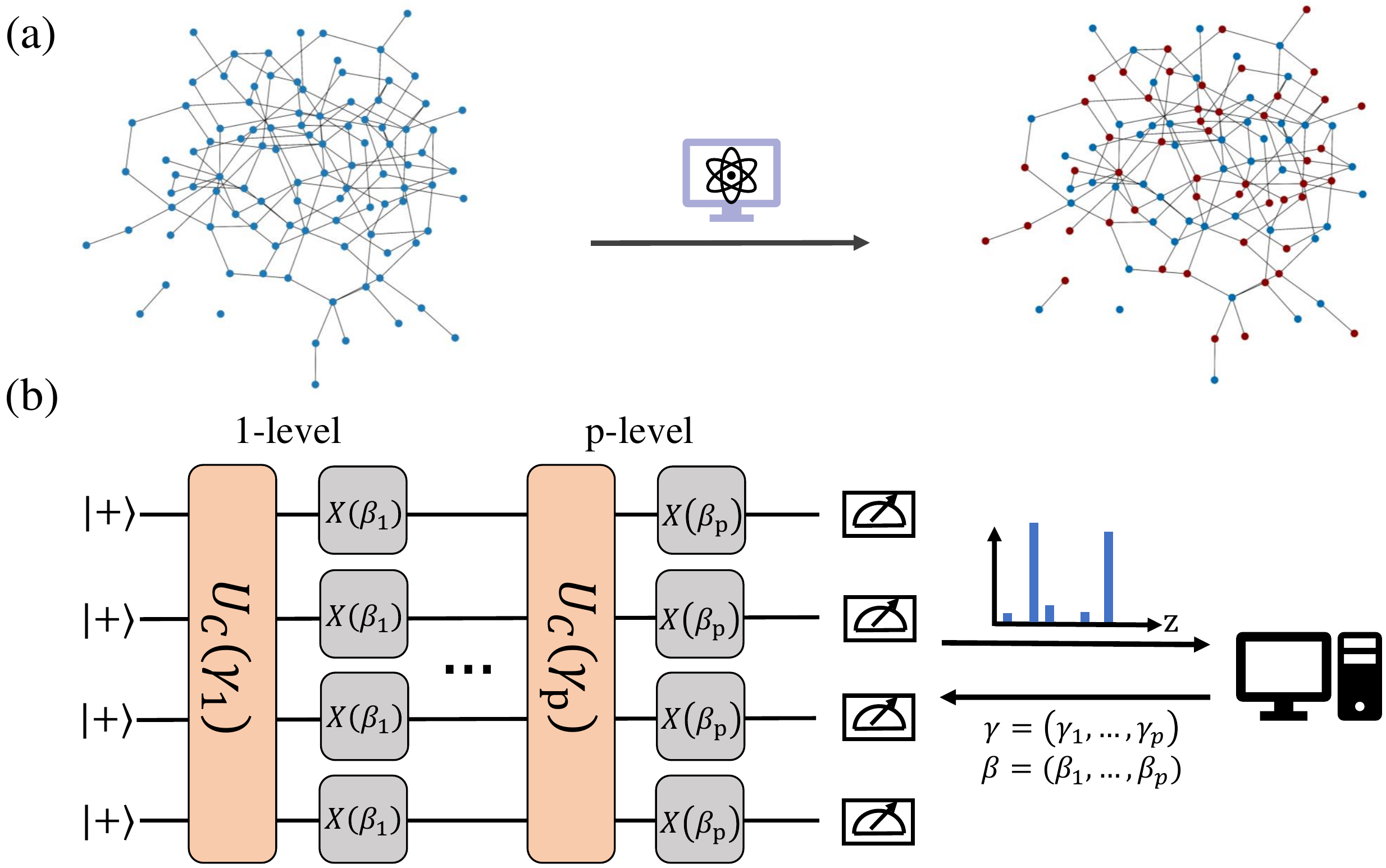}
\caption{\label{fig:1} \small{\textbf{MaxCut and QAOA} (a) One instance of MaxCut with 100 vertices. The left part is the problem graph. In the right part, two subsets of vertices are distinguished by its color as blue or red. (b) For a $p$-level QAOA,   $U_C(\gamma_i)$ and $U_B(\beta_i)$ are alternately applied to an initial state. The classical optimizer uses the measured bitstring to updated parameters of the circuit.}}
\end{figure}

To further enhance the capability of QAOAs, here we investigate the possibility of QAOAs toward solving large-scale MaxCut problems, especially when the problem size is greatly larger than the qubit count of the accessible quantum machines. In particular, we  revisit MaxCut through the lens of the divide-and-conquer heuristic, i.e., splitting the given graph into multiple subgraphs, seeking the solutions of these subgraphs in parallel,  and then merging these solutions to obtain the global solution. Notably, we prove that due to the inherent $\mathbb{Z}_2$ symmetry in MaxCut, the merging process can be cast to a new MaxCut problem. This observation hints that the large-scale MaxCut problem can be tackled by MaxCut solvers in a hierarchical way. To this end, we propose QAOA-in-QAOA  ($\QAOA$) to solve MaxCut with tens of thousands of nodes using NISQ machines. In addition, $\QAOA$ can integrate with the community detection method in the process of graph partition to attain better performance. On the theoretical side, we show that the lower bound of the approximation ratio of $\QAOA$ is $1/2$. On the experimental side, we first approximate the solution of MaxCut instances with 2000 vertices using $\QAOA$ executed on a 10-qubit quantum simulator. The achieved results are competitive or even better than the best known classical method. Moreover, through a systematical investigation, we verify that the density of graphs and subgraphs, the size of subgraphs and partition strategies are the decisive factors effecting the performance of $\QAOA$. These results suggest that $\QAOA$ provides a potential and novel way of utilizing  NISQ machines to solve practical learning problems with computational advantages.  

\begin{figure*}[htb]
\includegraphics[width=0.98\textwidth]{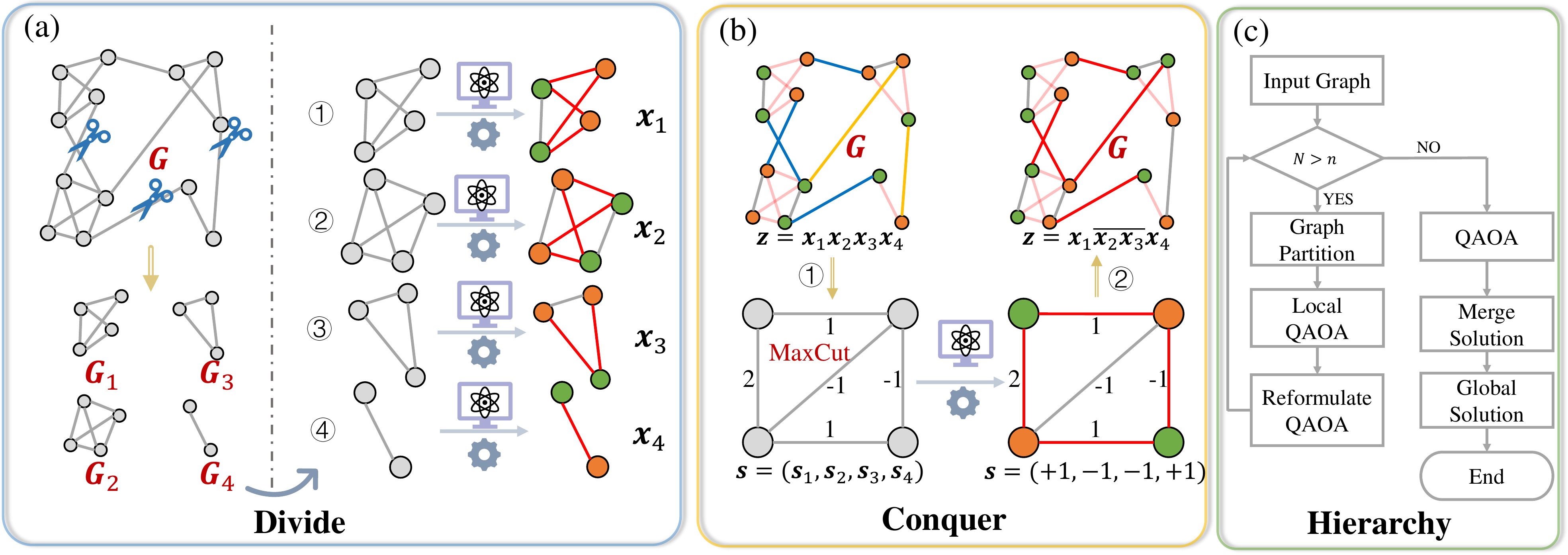}
\caption{\small{\textbf{Scheme of $\QAOA$.} (a) A graph is partitioned into four parts, where each one is no larger than the available number of qubits (e.g., $4$ qubits). Then the adopted MaxCut solvers are applied to all subgraphs in parallel. The green node refers to the bit $+1$ and the orange node refers to the bit $-1$. The cut edges are in red. (b) The first step highlighted by the brown arrow refers to merge local solutions of all subgraphs and calculate the value of cut between subgraphs. Yellow lines refer to `cut' and blue lines refer to `uncut'. The lower left plot indicates that the accommodation of local solutions can be reformulated a new Maxcut problem with four nodes. The first step highlighted by the brown arrow means that MaxCut solvers are applied to solve this new problem. Then, local solutions are merged according to solution of $s$. (c) An extremely large MaxCut can be solved by applying $\QAOA$ in a hierarchy way. When the graph size is above the limitation of qubits, it is partitioned and solved locally, and then reformed as a new MaxCut until its size is no larger than the available number of qubits. After all optimizations, low-level local solutions are merged according to high-level solutions.} }
\label{fig:2}
\end{figure*}

\section{Preliminaries}
The main focus of this study is solving MaxCut problems as shown in Fig.~\ref{fig:1}(a). Formally, let $\mathsf{G}(\mathsf{V}, \mathsf{E})$ be an undirected graph, where the number of vertices is $|\mathsf{V}|=N$ and the edge weight for $(i,j)\in \mathsf{E}$ is  $\bW_{ij}=W_{ji}$. Define a cut as a partition of the original set $\mathsf{V}$ into two subsets  $\mathsf{S}$ and $\mathsf{T}$ with $\mathsf{V}=\mathsf{S}\bigcup \mathsf{T}$ and $\mathsf{S}\bigcap \mathsf{T} = \emptyset$. The aim of MaxCut is to find $\mathsf{S}$ maximizing the sum of edge weight connecting vertices in $\mathsf{S}$ and $\mathsf{T}$. Denote $\bz_i=+1$ ($\bz_i=-1$)  when the $i$-th vertex is in $\mathsf{S}$   ($\mathsf{T}$), any partition of $\mathsf{V}$ can be represented by a bitstring $\bz \in \{+1,-1\}^n$. The optimal solution $\bz^*$ of MaxCut  maximizes the following objective function
\begin{equation}\label{eqn:def_maxcut}    
        C(\bz) := \frac{1}{2}\sum_{(i,j)\in \mathsf{E}} \bW_{ij}(1-\bz_i\bz_j) = c - \frac{1}{2}\sum_{(i,j)\in \mathsf{E}}\bW_{ij}\bz_i\bz_j,
\end{equation}
where $c=\frac{1}{2}\sum_{(i,j)\in \mathsf{E}} W_{ij}$ only depends on the problem and is independent of $\bm{z}$. Theoretical studies have proven that finding $\bz^*$ is NP-hard so in most cases we are searching for an approximation of $\bz^*$ \cite{karpReducibilityCombinatorialProblems1972}.  The best-known classical MaxCut solver is Goemans-Williamson (GW) algorithm \cite{goemansImprovedApproximationAlgorithms1995}, which uses the semi-definite programming (SDP) technique to ensure 0.879 approximation ratio \cite{goemans879approximationAlgorithmsMAX1994}. 

To carry out combinatorial problems on physical systems, it is necessary to   map the problem to a Hamiltonian whose ground state corresponds to the optimal solution \cite{gloverTutorialFormulatingUsing2019}. In the task of MaxCut with $n$ vertices, this Hamiltonian yields 
\begin{equation}
    \label{eq:2}
    H_C = \sum_{(i,j)\in \mathsf{E}}W_{ij}Z_iZ_j,
\end{equation}
where $Z_i$ refers to the Pauli-Z operator applied on the $i$-th qubit with $i\in[n]$ \cite{nielsenQuantumComputationQuantum2010}. The optimal solution $\ket{\bz^*}$ amounts to the computational basis state $|\bz\rangle$  minimizing $\langle \bz|H_C|\bz\rangle$. Since $H_C$ is diagonal, $\ket{\bz^*}$  also refers to its ground state. 

Quantum approximate optimization algorithm and its variants (QAOAs) \cite{edwardfarhiQuantumApproximateOptimization2014}, which absorb the merits of quantum annealing \cite{farhiQuantumSupremacyQuantum2019a} and variational quantum algorithms~\cite{cerezoVariationalQuantumAlgorithms2021}, are proposed to solve combinatorial problems on NISQ machines with potential advantages. As shown in Fig.~\ref{fig:1}(b), when applied to MaxCut, QAOA approximates the ground state by an ansatz state
\begin{equation}\label{eqn:def-QAOA}
    |\Phi(\bgamma, \bbeta)\rangle =  U_B(\bbeta_p)U_C(\bgamma_p)\dots U_B(\bbeta_1)U_C(\bgamma_1)|s\rangle
\end{equation}
where $|s\rangle=(|0\rangle^{\otimes n}+|1\rangle^{\otimes n})/\sqrt{2}$ is the initial state, $\bgamma, \bbeta \in (0, 2\pi]^p$ are variational parameters, and $U_B(\beta) = \exp(-i\beta \sum_k^nX_k)$ and $U_C(\gamma) = \exp(-i\gamma H_C)$. To find $\ket{\bz^*}$, a classical optimizer is used to update $\bgamma$ and $\bbeta$ by minimizing the following objective function 
\begin{equation}
    \label{eq:4}
    C(\bgamma,\bbeta)=\langle\Phi(\bgamma, \bbeta) | H_C|\Phi(\bgamma, \bbeta)\rangle.
\end{equation}
In the optimal setting, the optimal solution yields $\ket{\bz^*}=\ket{\Phi(\bgamma^*,\bbeta^*)}$ with $\ket{\Phi(\bgamma^*,\bbeta^*)}= \arg\min C(\bgamma,\bbeta)$.  

 \begin{figure*}[htb]
\includegraphics[width=0.98\textwidth]{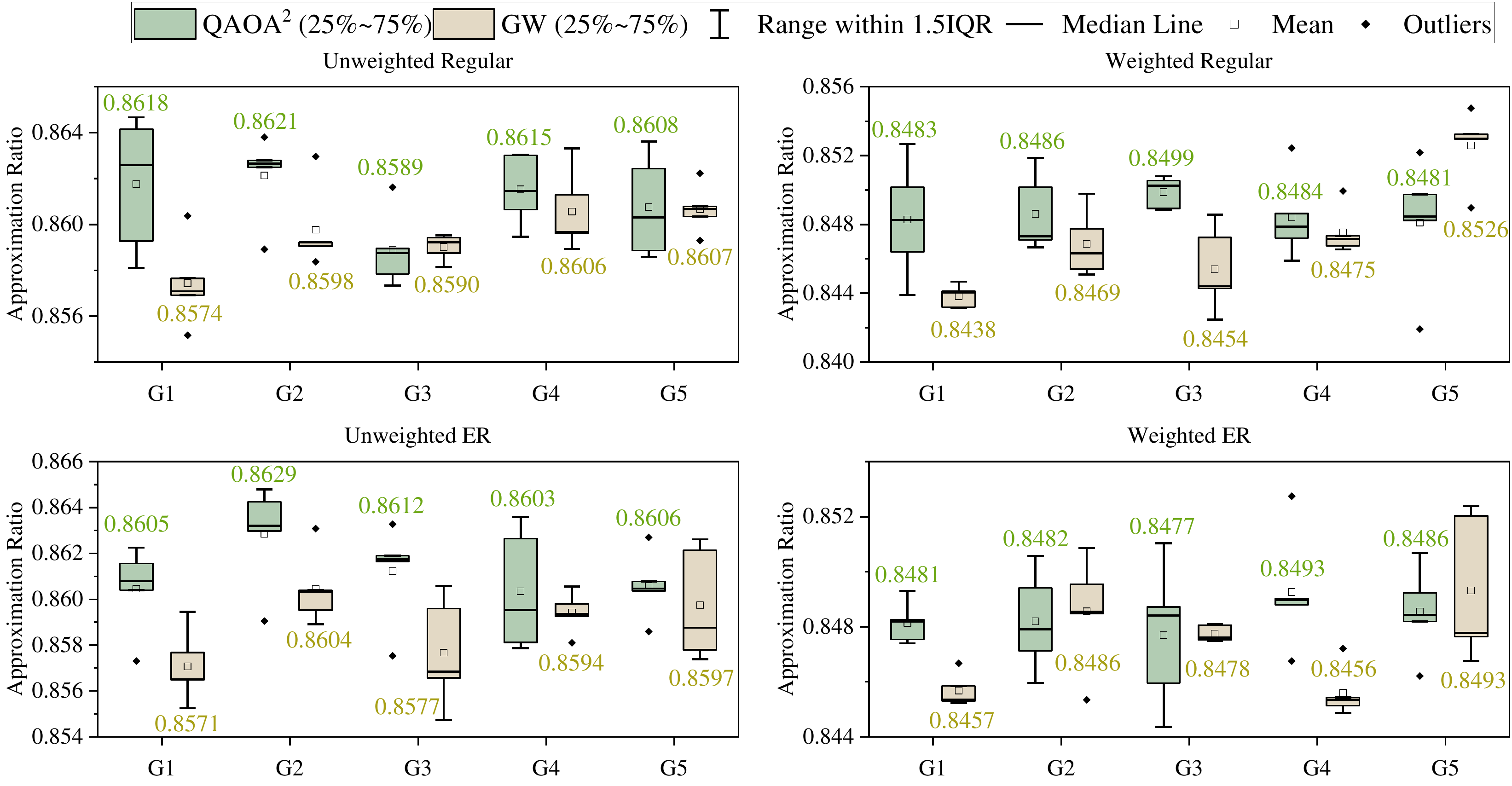}
\caption{ \small{\textbf{Approximation ration of $\QAOA$ and  GW over 2000-vertex graph instances under $C_{\text{SDP}}^*$.}  Numerical results on four types of graphs. There are five instances in each type. The number of vertices is fixed to  be $2000$. The x-axis represents the graph instance G1-G5 and the y-axis represents the approximation ratio. In each panel, $\QAOA$ is in green and GW is in yellow. The number beside each box is the mean value of approximation ratios in five trials, which is represented by the square symbol in each box. }}
\label{fig:3}
\end{figure*} 

\section{QAOA-in-QAOA}
Standard QAOAs require $n$ qubits to solve a MaxCut problem with $n$ vertices. This linear dependence suppresses the   power of QAOA, since the accessible quantum resources nowadays are vey limited. To assign the capability of QAOAs for solving large-scale problems, here we revisit MaxCut through the lens of the divide-and-conquer heuristic. Specifically, in the \textit{dividing} step, we partition the  given graph $\mathsf{G}$ into $h$ subgraphs $\{\mathsf{G}_i(\mathsf{V}_i,\mathsf{E}_i)\}_{i=1}^h$, where $\mathsf{V}=\bigcup_{i=1}^h \mathsf{V}_i$ and $\mathsf{V}_i\bigcap \mathsf{V}_j = \emptyset$ when $i\neq j$. An intuition of this partition process is exhibited in Fig.~\ref{fig:2}(a). Once the partition is completed,   the MaxCut solvers are exploited to seek optimal solutions of these subgraphs in parallel. We denote the optimized solutions for all $h$ subgraphs as $\{\bx_i\}_{i=1}^h$ with $\bx_i\in \{+1, -1\}^{|\mathsf{V}_i|}$ for $\forall i\in[h]$. Due to the $\mathbb{Z}_2$ symmetry in MaxCut \cite{bravyiObstaclesStatePreparation2020}, the bitstring  $\bbx_i$, which flips all bits in $\bx_i$,  also corresponds to the solution of $\mathsf{G}_i$ for $\forall i\in[h]$. In the \textit{conquering} step, as shown in Fig.~\ref{fig:2}(b), the obtained solutions of all subgraphs are merged to obtain the global solution $\bz$ of $\mathsf{G}$. Since there are two solutions for each subgraph, the total number of the possible global solutions is $2^h$, i.e.,  $\bz\in \mathcal{Z}:=\{\bx_1,\bbx_1\}\oplus \{\bx_2,\bbx_2\}... \oplus\{\bx_h,\bbx_h\}$. Taking into account the connections among $h$ subgraphs, the global solution yields
\begin{equation}\label{eqn:maxcut-divi-co}
	\hat{\bz}=\arg\max_{\bz\in \mathcal{Z}} C(\bz).
\end{equation}
The following theorem illustrates that seeking $\hat{\bz}$  can be cast into a new MaxCut problem, where the corresponding proof is provided in Appendix \ref{apx:aggregate}. 
\begin{thm}\label{thm:di-an-co-maxcut}
Suppose that the graph $\mathsf{G}$ is partitioned into $h$ subgraphs $\{\mathsf{G}_i\}_{i=1}^h$ and the optimized local solutions are $\{\bx_i\}_{i=1}^h$. To find the bitstring $\hat{\bz}$ in Eq.~(\ref{eqn:maxcut-divi-co}), let $\bs_i= +1$ (or $-1$) be the indicator of adopting $\bx_i$ (or $\bbx_i$). Then $\hat{\bz}$  is identified by $\bs=(\bs_1,\dots, \bs_h)\in \{\pm1\}^h$, i.e., 
\begin{equation}
    \label{eq:6}
    \max_{\bz\in \mathcal{Z}}C(\bz) \equiv \max_{\bs}   C^\prime(\bs)  = -\frac{1}{2}\sum_{i < j} w_{ij}^\prime \bs_i\bs_j + c,
\end{equation}
where $w_{ij}^\prime = \frac{1}{2}\sum_u \sum_v \bx_i^{(u)}\bx_j^{(v)} W_{iu,jv} $, $\bx_i^{(u)}$ is the $u$-th bit in $\bx_i$, $W_{iu,jv}$ is the weight of edge between the two nodes corresponding to $\bx_i^{(u)}$ and $\bx_j^{(v)}$, and $c=\sum_i C_i(\bx_i) + \sum_{i < j}\frac{1}{2}\sum_u \sum_v W_{iu,jv}$, $C_i(\bx_i)$ is the optimized value of cut for the subgraph $\mathsf{G}_i$. 
\end{thm}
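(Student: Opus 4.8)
The plan is to rewrite the full cut objective $C(\bz)$ directly in terms of the indicator vector $\bs$, exploiting that every admissible $\bz\in\mathcal{Z}$ is determined by the choice $\bz_{iu}=\bs_i\bx_i^{(u)}$, where $\bz_{iu}$ denotes the bit assigned to the $u$-th node of subgraph $\mathsf{G}_i$ (so $\bs_i=+1$ keeps $\bx_i$ and $\bs_i=-1$ selects the flipped string $\bbx_i=-\bx_i$). The first step is to partition the edge set $\mathsf{E}$ into intra-subgraph edges (both endpoints in the same $\mathsf{G}_i$) and inter-subgraph edges (endpoints in distinct $\mathsf{G}_i$ and $\mathsf{G}_j$), and to split $C(\bz)$ in Eq.~(\ref{eqn:def_maxcut}) accordingly into an intra part $\sum_i C_i^{\mathrm{intra}}$ plus an inter part $\sum_{i<j}C_{ij}^{\mathrm{inter}}$.

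The key observation --- and the place where the $\mathbb{Z}_2$ symmetry enters --- is that each intra-subgraph term is invariant under the global sign $\bs_i$. Indeed, for an edge inside $\mathsf{G}_i$ one has $\bz_{iu}\bz_{iv}=\bs_i^2\,\bx_i^{(u)}\bx_i^{(v)}=\bx_i^{(u)}\bx_i^{(v)}$ because $\bs_i^2=1$, so the contribution collapses to the fixed subgraph cut value $C_i(\bx_i)$ regardless of $\bs$. Summing over $i$ yields the $\bs$-independent piece $\sum_i C_i(\bx_i)$, which becomes part of the constant $c$. For an inter-subgraph edge between nodes $iu$ and $jv$ I would substitute $\bz_{iu}\bz_{jv}=\bs_i\bs_j\,\bx_i^{(u)}\bx_j^{(v)}$ and expand $\tfrac12 W_{iu,jv}(1-\bz_{iu}\bz_{jv})$; the ``$1$'' collects into the constant $\sum_{i<j}\tfrac12\sum_u\sum_v W_{iu,jv}$, while the product term produces a $\bs_i\bs_j$ coupling weighted by $\sum_u\sum_v \bx_i^{(u)}\bx_j^{(v)}W_{iu,jv}$.

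Collecting the two constant contributions gives exactly the stated $c=\sum_i C_i(\bx_i)+\sum_{i<j}\tfrac12\sum_u\sum_v W_{iu,jv}$, and the remaining quadratic form in $\bs$ has coefficients read off as $w_{ij}^\prime$ up to the scalar factor in its definition. Matching this expression against the canonical MaxCut form of Eq.~(\ref{eqn:def_maxcut}) --- a constant minus a half-weighted sum of $\bs_i\bs_j$ --- shows that $\max_{\bz\in\mathcal{Z}}C(\bz)$ coincides with a MaxCut instance on the $h$-node graph whose edge weights are $w_{ij}^\prime$, completing the claim. The only real obstacle is careful bookkeeping: one must track the symmetry $w_{ij}^\prime=w_{ji}^\prime$, the double counting implicit in the unordered sum over edges versus the ordered double sum $\sum_u\sum_v$, and the resulting factor of $\tfrac12$, in order to confirm that the quadratic coefficient lines up with the asserted $-\tfrac12\sum_{i<j}w_{ij}^\prime\bs_i\bs_j$ and hence that the reduction is exact rather than merely structural.
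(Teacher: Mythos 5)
Your proposal is correct and follows essentially the same route as the paper's proof: split $C(\bz)$ into intra-subgraph terms (which are $\bs$-invariant since $\bs_i^2=1$ and collapse to $\sum_i C_i(\bx_i)$) and inter-subgraph terms, substitute $\bz_{iu}=\bs_i\bx_i^{(u)}$, and expand; the paper merely packages the same algebra through the intermediate quantities $w_{ij}^{\text{sync}}$ and $w_{ij}^{\text{async}}$. The factor-of-$\tfrac12$ bookkeeping worry you flag at the end is real but is the paper's issue rather than yours: the direct expansion (and the paper's own proof) gives the coupling $w_{ij}'=\sum_u\sum_v\bx_i^{(u)}\bx_j^{(v)}W_{iu,jv}$, whereas the theorem statement carries an extra factor of $\tfrac12$ in its definition of $w_{ij}'$.
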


\begin{figure}[htb]
\includegraphics[width=0.45\textwidth]{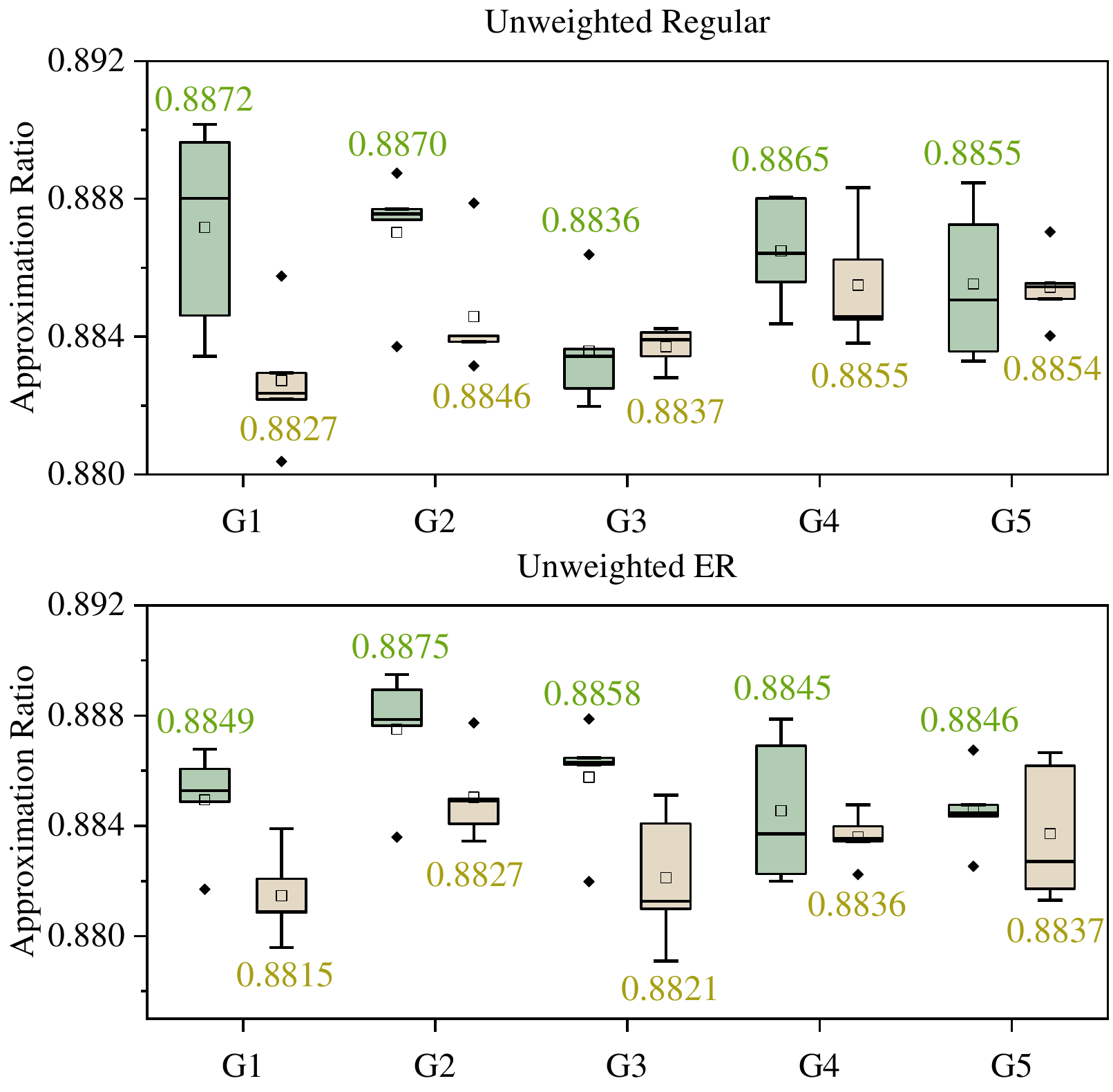}
\caption{ \small{\textbf{Approximation ration of $\QAOA$ and  GW over 2000-vertex graph instances under $C_{\text{asymp}}^*$.}  The labels follow the same meaning with those used in  Fig.~\ref{fig:3}. The only difference is the way of calculting the approximation ratio, where the denominator is replaced by $C^*_{\text{asymp}}$. }}
\label{fig:7}
\end{figure} 

The above results hint that MaxCut problems can be tackled in a hierarchical way. According to the reformulated MaxCut in the divide-and-conquer manner, we devise QAOA-in-QAOA, dubbed \textit{$\QAOA$}, which allows us to use an $n$-qubit quantum machine to solve an $N$-vertex MaxCut with $n\ll N$. The schematic of  $\QAOA$ is shown in Fig.~\ref{fig:2}. More specifically, in the partition procedure, the graph is divided into several subgraphs that are compatible with quantum devices. This step can be achieved by using random clustering, community detection,  or other advanced strategies (See Appendix \ref{apx:partition} for elaborations). The setting of $h$ is flexible, whereas the only requirement is that the size of subgraphs should be less than $n$, i.e., $|\mathsf{V}_i|\leq n$, $\forall i \in [h]$. After partitions, all subgraphs $\{\mathsf{G}_i\}_{i=1}^h$ are solved independently by QAOAs to collect $\{\bx_i\}_{i=1}^h$. Last, to obtain the estimated global solution $\hat{\bz}$,  we apply QAOAs again to optimize the merging through reformulated MaxCut according to Theorem~\ref{thm:di-an-co-maxcut}. Note that when $h>n$, the available number of qubits is insufficient to complete the merging process. As such, $\QAOA$ applies the partition procedure successively on $C^\prime(\bs)$ until the number of subgraphs is no larger than $n$. The diagram of $\QAOA$ is summarized in Fig.~\ref{fig:2}(c).

$\QAOA$ embraces two attractive theoretical advantages. First, compared with other QAOA solvers, $\QAOA$ is immune to the scalability issue. This characteristic is highly desired for NISQ machines, which provides the opportunity to attain potential quantum advantages towards large-scale combinatorial problems. Moreover, the following theorem whose proof is given in Appendix \ref{apx:proof} guarantees that the approximation ratio of $\QAOA$ is always better than the random guess. 
\begin{thm}\label{thm:low-bound}
	Following notations in Theorem \ref{thm:di-an-co-maxcut}, the estimated solution $\hat{\bz}$ output by  $\QAOA$ always satisfies 
	\begin{equation}
		C(\hat{\bz}) \geq \frac{1}{2}\sum_{(i,j)\in \mathsf{E}}W_{ij}.
	\end{equation}
\end{thm}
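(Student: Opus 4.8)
The plan is to show that $C(\hat{\bz})$ never falls below the cut that a uniformly random assignment attains in expectation, which is exactly $\frac12\sum_{(i,j)\in\mathsf{E}}W_{ij}$. To organize the bookkeeping I would split the total edge weight into an intra-subgraph part $A=\sum_k\sum_{(ku,kv)\in\mathsf{E}_k}W_{ku,kv}$ and an inter-subgraph part $B=\sum_{i<j}\sum_{u,v}W_{iu,jv}$, so that $\sum_{(i,j)\in\mathsf{E}}W_{ij}=A+B$, and then bound the cut contributed by each part below by one half of its own weight.

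First I would invoke Theorem~\ref{thm:di-an-co-maxcut} to replace the maximization over the $2^h$ candidates by $\max_{\bs}C^\prime(\bs)$, and use the elementary fact that a maximum dominates an average: $C(\hat{\bz})=\max_{\bs}C^\prime(\bs)\geq\mathbb{E}_{\bs}[C^\prime(\bs)]$, where $\bs\in\{\pm1\}^h$ is drawn with independent uniform signs. In the reformulated objective $C^\prime(\bs)=-\frac12\sum_{i<j}w_{ij}^\prime\bs_i\bs_j+c$, independence forces $\mathbb{E}_{\bs}[\bs_i\bs_j]=0$ for every $i\neq j$, so the quadratic term vanishes in expectation and $\mathbb{E}_{\bs}[C^\prime(\bs)]=c=\sum_i C_i(\bx_i)+\frac12 B$ by the definition of $c$ in Theorem~\ref{thm:di-an-co-maxcut}. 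This already yields $C(\hat{\bz})\geq\sum_i C_i(\bx_i)+\frac12 B$, i.e. the inter-subgraph edges contribute at least their full random-guess share $\frac12 B$, purely from the $\mathbb{Z}_2$ averaging.

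Next I would bound the intra part. Since each $\bx_k$ is an optimized local solution of $\mathsf{G}_k$ and flipping a whole subgraph leaves its internal cut unchanged, $C_k(\bx_k)$ is at least the expected cut of a random assignment restricted to $\mathsf{G}_k$, namely $\frac12\sum_{(ku,kv)\in\mathsf{E}_k}W_{ku,kv}$; summing over $k$ gives $\sum_k C_k(\bx_k)\geq\frac12 A$. Combining the two estimates produces $C(\hat{\bz})\geq\frac12 A+\frac12 B=\frac12(A+B)=\frac12\sum_{(i,j)\in\mathsf{E}}W_{ij}$, which is the claim. For the hierarchical version of $\QAOA$, in which $C^\prime(\bs)$ is itself partitioned and solved by the same routine, I would promote this to an induction on the recursion depth: the inductive step is exactly the averaging argument above applied to one level of merging, and the base case is a subgraph small enough to be solved directly, whose returned cut already meets the random-guess value.

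The main obstacle I anticipate is the intra-subgraph estimate rather than the inter-subgraph averaging, which is automatic once the $\bs_i$ are independent. The delicate point is justifying $C_k(\bx_k)\geq\frac12 W(\mathsf{G}_k)$: it holds whenever the inner solver returns a cut no worse than random guessing, and in particular for an exact optimizer, since the optimum always dominates the random average. Consequently the bound is only as strong as this guarantee at the bottom of the recursion, and the argument requires the recursion to terminate at subgraphs solved to at least this standard.
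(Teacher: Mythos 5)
Your proposal is correct and follows essentially the same route as the paper: both decompose the edge weight into intra-subgraph and inter-subgraph parts, bound the former by the local solvers' half-weight guarantee $C_i(\bx_i)\ge\frac12\sum_{e\in\mathsf{E}_i}w(e)$, and bound the latter by showing the merging MaxCut attains at least $c$, the paper via explicit algebra with $w_{ij}^{\text{sync}}$ and $w_{ij}^{\text{async}}$ and you via $\mathbb{E}_{\bs}[\bs_i\bs_j]=0$, which is the same computation. The only caution is that your opening equality $C(\hat{\bz})=\max_{\bs}C'(\bs)$ is stronger than what $\QAOA$ guarantees; what is actually needed (and what the paper also implicitly assumes) is only that every solver call, including the merging step, returns a cut no worse than the random-assignment baseline, a caveat you correctly identify for the inner solvers but which applies at every level of the recursion.
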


\section{Experiment Results}

\begin{figure*}[htb]
\includegraphics[width=0.98\textwidth]{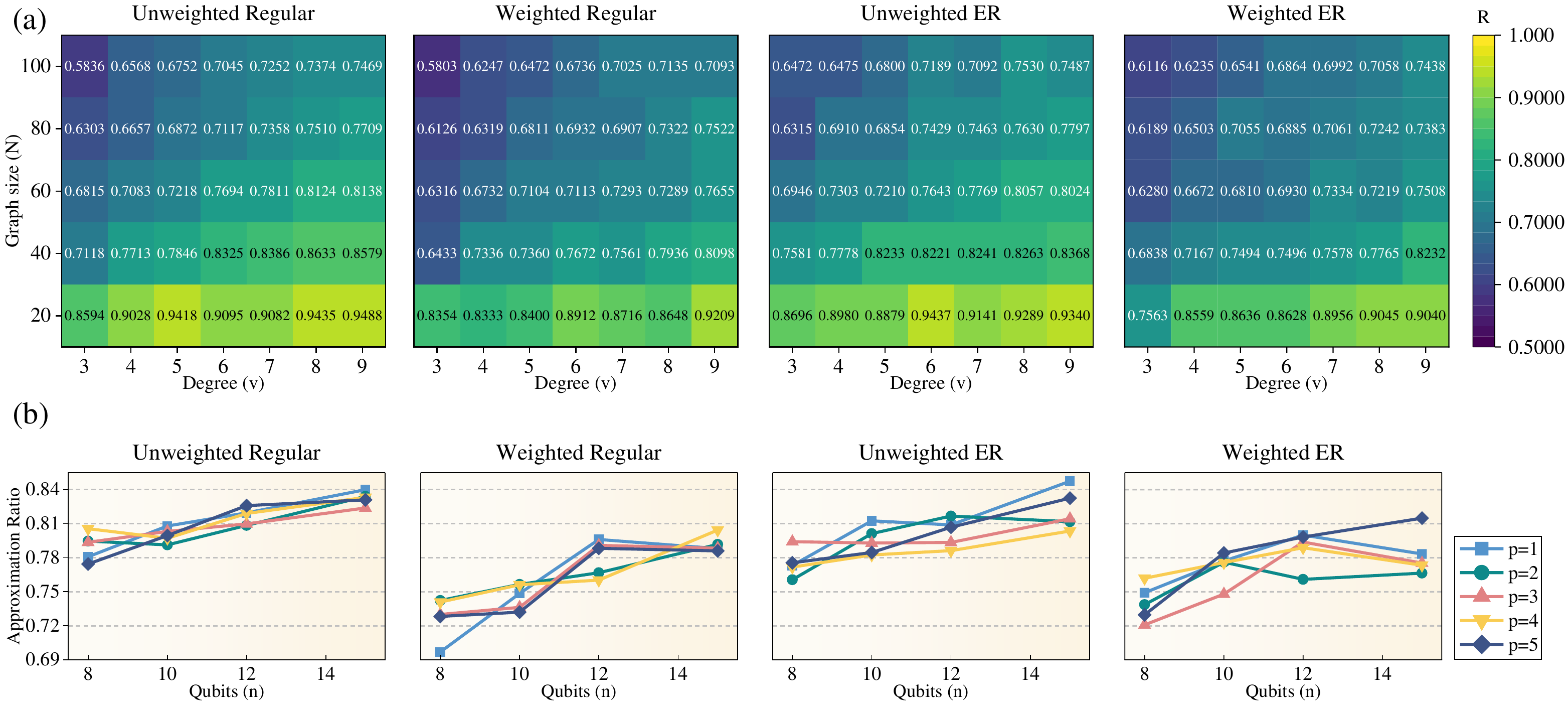}
\caption{  \small{(a) \textbf{Influence of graph density}. In heat map, x-axis represents the degree of vertices and the y-axis represents the size of graphs. The color of points indicates the approximation ratio of corresponding graph setting. Brighter color means higher ratio. (b) \textbf{ $\QAOA$ with different qubit size}. Fix graphs to 60-vertex and 9-degree. The performance grows as the qubit size grows from 8 to 15. Five types of curves represents the level of QAOA circuit from 1-5.  }}
\label{fig:4}
\end{figure*}

In this section, we conduct numerical simulation to evaluate the ability of $\QAOA$ towards large-scale MaxCut problems. Specifically, we first give the setup of our experiments, including hyper-parameter settings and the constructions of graphs for MaxCut.   Then we compare performance of our proposal with other MaxCut solvers. Last, we comprehend the potential factors influencing the performance of $\QAOA$.

To fully evaluate the performance of $\QAOA$, we collect a set of graphs varying in size, degrees of nodes, and the weight of edges. For convenience, an unweighted (or weighted) $d$-regular graph with $n$ nodes is abbreviated as ``u$d$r-$n$" (or ``w$d$r-$n$"). Similarly, for the Erdos-Renyi (ER) graph whose degree of each node is randomly assigned, an unweighted (or weighted) ER graph with the average degree being $d$ and the node size being $n$ is abbreviated as ``u$d$e-$n$" (or ``w$d$e-$n$"). For the weighted graphs, their weights are integers uniformly sampled from $[0,5]$. For each setting, we generate multiple instances to obtain the statistical results of the employed MaxCut solver.

When $\QAOA$ is applied, we specify that the allowable number of qubits is $n\leq 10$ and the random partition strategy is adopted. To find the local solutions $\bx_i$ in the dividing step, we use a 1-layer QAOA (i.e., $p=1$ in Eq.~(\ref{eqn:def-QAOA})) and the number of iterations for training QAOA is set as $T=20$.  To systematically investigate the potential of $\QAOA$ compared with the classical MaxCut solvers, we employ GW algorithm as the reference. The setting of hyper-parameters of adopted MaxCut solvers is given in Appendix \ref{append:sec:sim-res}.

In all simulations, we adopt the approximation ratio as the metric to compare the performance of different MaxCut solvers. Mathematically, given a MaxCut problem, denote $C_A$ as the cuts achieved by algorithm $\mathcal{A}$, the approximation ratio yields  
\begin{equation}
    r_{\mathcal{A}} = \frac{C_\mathcal{A}}{C^*}.
\end{equation}
where $C^*$ denotes the optimal value of MaxCut. Considering that the exact optimal value $C^*$ is exponentially difficult to get for large $N$, we substitute it with the optimal value $C^*_{\text{SDP}}$ of semidefinite programming optimization step in GW. Another metric we adopt for unweighted regular graphs and unweighted ER graphs with degree $d$ is an asymptotic value derived in \cite{demboExtremalCutsSparse2017}, where the optimal value of cut is $(\frac{d}{4} + P\sqrt{\frac{d}{4}}+o(\sqrt{d}))N$ and $P = 0.7632$ is the constant in Parisi formula. In calculating approximation rate, we set  the optimal result as $C^*_{\text{asymp}} = (\frac{d}{4} + P\sqrt{\frac{d}{4}})N$.

We first apply $\QAOA$ to 2000-vertex graphs, i.e., u100r-2000, w100r-2000, u100e-2000, and w100e-2000. The numerical results are shown in Fig.~\ref{fig:3} and Fig.~\ref{fig:7} under the measures $C^*_{\text{SDP}}$ and $C^*_{\text{asymp}}$, respectively. Each setting includes 5 graph instances. Meanwhile, for each instance, we use $5$ different random seeds to initialize parameters of $\QAOA$. In almost all instances, $\QAOA$ outperforms GW. For example, under the measure $C_{\text{SDP}}^*$, the averaged approximation ratio $\QAOA$ is higher than that of GW, except for the instance $G_3$ with u$1000$r-2000, the instance G5 with w$1000$r-2000, and the instances G2, G3, and G5 with w$1000$e-2000. Under the measure $C_{\text{asymp}}^*$, the averaged approximation ratio $\QAOA$ is higher than that of GW, except for the instance $G_3$ with u$1000$r-2000. In this  instances, the maximal differences of the approximation ratio is only $0.0001$.  Moreover, both $\QAOA$ and GW can obtain a better performance on unweighted graphs than weighted graphs.  A possible reason is the intrinsic hardness in finding optimal solution for weighted graphs. Besides, since the distribution of weighted edges is unbalanced and the graph connectivity is sparse, random partition used in $\QAOA$ is not a suitable choice, which may leave most edges remained among subgraphs. To this end, we exhibit how advanced partition methods, i.e., community detection algorithm, can further improve the power of $\QAOA$ in Appendix \ref{append:sec:sim-res}. Note that $\QAOA$ may attain a better runtime efficiency than GW, benefiting from its hierarchical scheme and the computational advantage of quantum algorithm.

\medskip
We next explore the potential factors that may influence the performance of $\QAOA$.  As mentioned previously, the sparsity of graph may reduce the power of $\QAOA$. To fully understand this effect, we conduct the systematical simulations on mild-size graphs varying in the number of nodes and the graph connectivity. Specifically, we study u$d$r-$n$, w$d$r-$n$, u$d$e-$n$ and w$d$e-$n$ with $d\in [3,9]$ and $n\in [20,100]$. For each setting, we generate 10 instances and use the average approximation ratio of them to evaluate the performance of $\QAOA$. Here both $C^*_{\text{SDP}}$ and $C^*_{\text{asymp}}$ are too loose, so we use the value of cut searched by GW algorithm as $C^*$ to calculate the approximation ratio $r_{\mathcal{A}}$. As shown in Fig.~\ref{fig:4}(a), an evident observation is that $\QAOA$ prefers denser graph than sparse graph. For example, $\QAOA$ achieves an approximation ratio of $0.9488$ on u$9$r-$20$ but only $0.5836$ on u$3$r-$100$. Consequently, a reasonable conjecture is that the main contribution of cuts comes from the cuts inside subgraphs. In order to improve the performance of $\QAOA$, one possible way is making the subgraphs as dense as possible. An alternative approach is adopting a better graph partition strategy such as the community detection method discussed in Appendix \ref{apx:partition}. 

Apart from the property of graph, the hyperparameter setting of $\QAOA$, i.e., the qubit counts $n$ and the number of layers $p$, may also effect its performance. With this regard, we consider the setting of graphs with 60 vertices and 9 degree i.e. u9r-60, u9e-60, w9r-60, and w9e-60. The number of qubits $n$ is chosen in $\{8, 10, 12, 15\}$ and the level $p$ is chosen in $[1,5]$. The results are collected across 10 instances for each setting. The achieved results are shown in Fig.~\ref{fig:4}(b). Specifically, under the measure of the averaged approximation ration, a deeper level $p$ slightly contribute much improvement. A concrete example is u9e-60, where the performance of $\QAOA$ with $p=4$ is inferior to the performance of $\QAOA$ with $p=3$.  Nevertheless,  the performance $\QAOA$ can increase significantly with the larger subgraph size $n$. For example, when the level is specified to be $p=1$, the approximation ratio of $\QAOA$ is increased by 0.06   in u9r-60 and 0.1 in w9r-60 when $n$ improves from 8 to 14. These observations indicate that executing $\QAOA$ on a large quantum system contributes to a better performance.

\section{Discussion}

In this study, we propose $\QAOA$ that utilizes the structure of graphs  and $\mathbb{Z}_2$ symmetry to solve large-scale MaxCut problems on small-scale quantum devices. We prove that a hierarchical scheme can be achieved via reformulated MaxCuts. The approximation ratio is always greater than 0.5. The numerical results show that the proposed $\QAOA$ achieves comparable performance on $2000$-vertex graphs against the best known classical algorithm. Moreover, numerical results indicates that   $\QAOA$ can attain better performance for denser graphs which are hard for conventional QAOAs.  Our work sheds light on solving large-scale problems with potential advantages in NISQ era.

There are several important future research directions. First, it is crucial to design more instance-aware partition strategies to further improve the capabilities of $\QAOA$. Furthermore, an intriguing direction is integrating $\QAOA$ with with distributed variational quantum optimization techniques  \cite{buhrmanDistributedQuantumComputing2003,cuomoDistributedQuantumComputing2020,duAcceleratingVariationalQuantum2021}, which allows us to accelerate the optimization and understand the power of $\QAOA$ on large-scale problems. Next, since the subgraphs are independent, the performance of $\QAOA$ could be enhanced by employing advanced and problem-specific local $\QAOA$ solvers \cite{zhuAdaptiveQuantumApproximate2020,wangXYMixersAnalytical2020,du2020quantum}. Last, the concept of decomposing Hamiltonian by its symmetric property used in $\QAOA$ can be generalized to boost other variational quantum algorithms. For example, in quantum chemistry, some proposals of variational eigensolvers have used the natural symmetry of some molecular to reduce the required number of  qubits \cite{bravyiTaperingQubitsSimulate2017,liuVariationalQuantumEigensolver2019,caoLargerMolecularSimulation2021}. In quantum machine learning, the concept of decomposing Hamiltonian by its symmetric property can be leveraged to design powerful Hamiltonian-based quantum neural networks with some invariant properties \cite{meyer2022exploiting,skolik2022equivariant}. In this way, these QNNs can attain better convergence and generalization \cite{Junyu2022dynamic,du2021efficient,huang2021information,abbas2020power,du2022theory}.

\newpage

\clearpage
\newpage
\appendix

\onecolumngrid
 
\begin{center}
\large{\textbf{Supplementary Material: ``QAOA-in-QAOA: solving large-scale MaxCut problems on small quantum machines''}}	
\end{center}
 
\medskip 
\onecolumngrid

The organization of Supplementary Material is as follows. In Appendix \ref{apx:partition}, we present how advanced partitioned methods can further improve the performance of $\QAOA$. In Appendix \ref{apx:aggregate}, we provide the proof of Theorem \ref{thm:di-an-co-maxcut}. Then in Appendix \ref{apx:proof}, we demonstrate the proof of Theorem \ref{thm:low-bound}. Subsequently, we discuss how $\QAOA$ relates to the Hamiltonian splitting method ind Appendix \ref{apx:physics}. Last, in Appendix \ref{append:sec:sim-res}, we exhibit the omitted simulation details and more simulation results of $\QAOA$.

\section{Graph partitioning}\label{apx:partition}

A crucial step in $\QAOA$ is partitioning graph into subgraphs.  We note that the way of partition is diverse. Here we present two possible partition methods, i.e., random partition and   community detection based partition. We leave the design of more advanced partition methods as the future work.

\medskip
\textbf{Random partition.} The algorithmic implementation of the random partition is as follows. Given the number of qubits $n$ and a graph with size $N$, random partition successively samples $n$ vertices as a subgraph without replacement until all $\lceil N/n \rceil$ subgraphs are collected.

We remark that for dense graphs, random partition promises a good learning performance of $\QAOA$ since the probability of an edge existing between arbitrary two vertex is high. In contrast, for graphs whose expected vertex degree is low, random partition may lead to an inferior performance. This is because the collected subgraphs may contain few edges and most edges remaining between subgraphs.

\medskip
\textbf{Community detection based partition.} According to the above explanations, a natural idea to enhance the power of $\QAOA$ is to maintain as many edges inside each subgraph as possible, which in turn suppresses the error incurred by  partition. With this regard, we introduce modularity \cite{newmanFindingEvaluatingCommunity2004} as a measure of the quality of partitioning. A mathematical definition of modularity is 
\begin{equation}
    Q = \frac{1}{2m}\sum_{ij}\left[A_{ij}-\frac{k_ik_j}{2m}\right]\delta(c_i,c_j),
    \label{eq:a1}
\end{equation} 
where $m$ is the sum of weights of all edges, $A_{ij}$ is the element of adjacency matrix, $k_i$ is the sum of weights of edges connected to vertex $i$, and $c_i$ is the community that the vertex $i$ is assigned. Intuitively, the term $\frac{k_ik_j}{2m}$ indicates the probability of edge existing between $i$ and $j$ in a randomized graph. When the fraction of edges within communities equals to a randomized graph, the quantity $Q$ will be zero. When $Q > 0.3$, it indicates a significant community structure \cite{clausetFindingCommunityStructure2004}. 

$\QAOA$ pursues a high modularity $Q$, where the connectivity in subgraphs is dense but the connectivity between different subgraphs is sparse. An algorithm searching for high modularity partition is referred to as \textit{community detection algorithm}, which regards subgraphs as communities. Several algorithms have been proposed to maximize modularity and to find community structure in graphs \cite{newmanFindingEvaluatingCommunity2004,newmanFastAlgorithmDetecting2004,clausetFindingCommunityStructure2004,blondelFastUnfoldingCommunities2008}. Here we consider the greedy modularity maximization algorithm  \cite{clausetFindingCommunityStructure2004}. In particular, starting with each vertex being only member of its own community, the algorithm joins the pair of communities that increases the modularity the most. This procedure is continuously conducted until no such pair exists or the termination condition is met. 

We benchmark the performance of $\QAOA$ with different partition methods in Appendix \ref{append:sec:sim-res}.

\section{\label{apx:aggregate} Proof of Theorem \ref{thm:di-an-co-maxcut}}

\begin{proof}[Proof of Theorem \ref{thm:di-an-co-maxcut}]
Although $\bx_i$ and $\bbx_i$ yield the same local objective value (i.e.,  $C_i(\bx_i)=C_i(\bbx_i)$), they may lead to a distinct  objective value (i.e.,  $C(\bx_1...\bx_i...\bx_h)\neq C(\bx_1...\bbx_i...\bx_h)$), because of the connection between $\mathsf{G}_i$ and other subgraphs $\{\mathsf{G}_{i'}\}_{i'\neq i}$. Considering that there are in total $2^h$ candidates  in $\mathcal{Z}$ in Eq.~(\ref{eqn:maxcut-divi-co}), our goal here is formulating an equivalent objective function that   finds the target bitstring satisfying
	$\hat{\bz}=\arg\min_{\bz\in \mathcal{Z}} C(\bz)$ using NISQ devices.   

Considering two neighboring subgraphs $\mathsf{G}_i$ and $\mathsf{G}_j$, we denote $\bs_i \in \{+1, -1\}$ as flipping indicator in the sense that $\bs_i=-1$ flips $\bx_i$ to be $\bbx_i$ and $\bs_i=+1$ keeps $\bx_i$ unchanged. When two subgraphs are synchronous, i.e. $\bs_i = \bs_j$, the inter-cut between them is retained and the inter-cut size is
\begin{equation}
    w_{ij}^{\text{sync}} = \sum_u \sum_v \frac{1}{2}(1-\bx_i^{(u)}\bx_j^{(v)})W_{iu,jv},
\end{equation}
where $\bx_i^{(u)}$ denotes the $u$-th bit in local solution bitstring $\bx_i$  with $u\in [|\mathsf{V}_i|]$ and $W_{iu,jv}$ is the weight of edge in $\mathsf{G}$  corresponding to two vertices $\bx_i^{(u)}$ and $\bx_j^{(v)}$.  Note that here we define $\bx_i \in \{+1,-1\}^{|V_i|}$, the coefficient $\frac{1}{2}$ acts as a standardization term to make sure $w_{ij}^{\text{sync}}$ is the sum of cut edges between $\mathsf{G}_i$ and $\mathsf{G}_j$ when adopting $\bx_i, \bx_j$ (or $\bbx_i, \bbx_j$).    When two subgraphs are asynchronous, i.e., $\bs_i = -\bs_j$, the inter-cut takes the form
\begin{equation}
    \begin{aligned}
        w_{ij}^{\text{async}} & =  \sum_u \sum_v \frac{1}{2}(1-\bx_i^{(u)}\bbx_j^{(v)})W_{iu,jv} \\
        &= \sum_u \sum_v \frac{1}{2}(1+\bx_i^{(u)}\bx_j^{(v)})W_{iu,jv},
    \end{aligned}
\end{equation}
where the second equality uses the relationship $\bbx_j = -\bx_j$.

Let $\bs=(\bs_1,\bs_2,\dots, \bs_h)\in \{+1,-1\}^h$ be an $h$-length bitstring as the indicator for the selection of $\{\bx_i,\bbx_i\}$ to form $\bz$. Define $w_{ij}^\prime = w_{ij}^{\text{async}}  - w_{ij}^{\text{sync}}$, $c=\sum_{i=1}^h C_i(\bx_i)+\sum_{i < j} \frac{1}{2}(w_{ij}^{\text{sync}} + w_{ij}^{\text{async}})$, and $C_i(\bx_i)$ as the local cut size of $\mathsf{G}_i$. For $\forall \bz\in\mathcal{Z}$, the objective value  $C(\bz)$ in Eq.~(\ref{eqn:maxcut-divi-co}) yields
\begin{equation}
    \label{eq:b3}
    \begin{aligned}
        &  C(\bz) \nonumber\\
        = & C(\bs_1\bx_1,\dots,\bs_h\bx_h) \nonumber\\
        = & \sum_{ i < j } \sum_u \sum_v \frac{1}{2}(1-\bs_i\bx_i^{(u)}\bs_j\bx_j^{(v)})W_{iu,jv} + \sum_i C_i(\bx_i) \nonumber\\ 
        = & \sum_{ i < j } \frac{1}{2}\left( -\bs_i \bs_j \sum_u \sum_v W_{iu,jv} \bx_i^{(u)}\bx_j^{(v)} + \sum_u \sum_v W_{iu,jv} \right) \nonumber\\ 
        & + \sum_i C_i(\bx_i) \nonumber\\
        = & \sum_{ i < j } \frac{1}{2}\left[(w_{ij}^{\text{sync}} + w_{ij}^{\text{async}}) + (w_{ij}^{\text{sync}} - w_{ij}^{\text{async}})\bs_{i}\bs_{j}\right] \nonumber\\
& + \sum_i C_i(\bx_i)\\
        = & -\frac{1}{2}\sum_{ i < j } w_{ij}^\prime \bs_i\bs_j + c,
    \end{aligned}
\end{equation}
where the second equality consists of two summations, i.e., the first term is the sum of inter cut between each pair of subgraphs and the second term is the sum of local cut inside subgraphs (which is not influenced by $\bs$),  the last second equality uses $w_{ij}^{\text{async}}  - w_{ij}^{\text{sync}} = \sum_u \sum_v W_{iu,jv} \bx_i^{(u)}\bx_j^{(v)}$ and $w_{ij}^{\text{async}}  + w_{ij}^{\text{sync}} = \sum_u \sum_v W_{iu,jv}$. 

Define $ C'(\bs) = c-\frac{1}{2}\sum_{ i < j } w_{ij}^\prime \bs_i\bs_j$. Since all local cuts $\{C_i(\bx_i)\}$  are fixed, we have
\begin{equation}
	\max_{\bs\in \{+1, -1\}^h}C'(\bs) = \max_{\bm{z}\in \mathcal{Z}}C(\bm{z}).
\end{equation} 
We find a good merging by optimizing the above.
\end{proof}

In most cases, optimizing merging of local solutions will improve the value of cut. However, if the local solutions happens to be in a good order, one can merge them without further optimization. We show the effect of merging optimization through experiments in Appendix \ref{append:sec:sim-res}.

\section{\label{apx:proof} Proof of Theorem \ref{thm:low-bound}}

\begin{proof}[Proof of Theorem \ref{thm:low-bound}]
	We follow notations defined in Appendix \ref{apx:aggregate} to prove  Theorem \ref{thm:low-bound}. Suppose that  $\mathsf{G}$ is partitioned into $h$ subgraphs $\mathsf{G}_1, \dots, \mathsf{G}_h$, we can divide edges into two parts $\mathsf{E}=\mathsf{E}^\text{inner}\bigcup \mathsf{E}^\text{inter}$ where $\mathsf{E}^\text{inner}=\bigcup_{i=1}^h \mathsf{E}_i$ denotes the set of edges inside all subgraphs and $\mathsf{E}^\text{inter}=\bigcup _{1\le i<j\le h}\mathsf{E}_{ij}$ denotes the set of edges between subgraphs.  Here the weight of edge $e\in \mathsf{E}$ is denoted   by $w(e)$. Then we have
\begin{equation}
    \label{eq:c1}
    \begin{aligned}
        \sum_{e \in \mathsf{E}} w(e) & = \sum_{e \in \mathsf{E}^\text{inner}} w(e) + \sum_{e \in \mathsf{E}^\text{inter}} w(e) \\
        & = \sum_{e\in \bigcup_{i=1}^h \mathsf{E}_i} w(e) + \sum_{e \in \bigcup _{1\le i<j\le h}\mathsf{E}_{ij}} w(e)\\
        & = \sum_{i=1}^h \sum_{e\in \mathsf{E}_i} w(e) + \sum_{1\le i<j\le h} \sum_{e\in \mathsf{E}_{ij}} w(e). 
    \end{aligned}
\end{equation}
When optimizing subgraphs, we can use any MaxCut solvers to return a set of local solutions $\{\bm{x}_i\}_{i=1}^h$ such that the cut value is greater than half of the sum of edge weights for all subgraphs $\{\mathsf{G}_i\}_{i=1}^h$. Mathematically, the sum of edge weight for each subgraph satisfies  
\begin{equation}
    \label{eq:c2}
    C_i(\bm{x}_i) \ge \frac{1}{2} \sum_{e\in \mathsf{E}_i} w(e) 
\end{equation}
and the sum of edge weight for all subgraphs yields 
\begin{equation}
    \label{eq:c3}
    C^{\text{inner}} = \sum_{i=1}^h C_i(\bm{x}_i) \ge \frac{1}{2} \sum_{e\in \bigcup_{i=1}^h \mathsf{E}_i} w(e).
\end{equation}
The above result means that we can always at least obtain half of the sum of edge weight edges inside $h$ subgraphs. 

Combining Eq.~(\ref{eq:c1}) and Eq.~(\ref{eq:c3}), an observation is that if $\QAOA$ outputs a solution $\bs=\{\bs_i\}_{i=1}^h$ such that the cut value of intra-subgraphs  achieves at least half of the second term  $\sum_{1\le i<j\le h} \sum_{e\in \mathsf{E}_{ij}} w(e)$ in Eq.~(\ref{eq:c1}), then the total cut value for the whole graph is greater than $\frac{1}{2}\sum_{e \in \mathsf{E}} w(e)$. Recall the terms $w_{ij}^{\text{async}}$ and $w_{ij}^{\text{sync}}$ defined in Theorem \ref{thm:di-an-co-maxcut}, we have 
\begin{equation}
    \label{eq:c4}
    \begin{aligned}
        \sum_{e \in \mathsf{E}^\text{inter}} w(e) &= \sum_{1\le i<j \le h} \sum_{e\in \mathsf{E}_{ij}} w(e) \\
        & = \sum_{1\le i<j \le h}  \sum_u \sum_v W_{iu,jv} \\
        & = \sum_{1\le i<j \le h} w_{ij}^{\text{async}}  + w_{ij}^{\text{sync}} \\
        & =\sum_{1\le i<j \le h} 2w_{ij}^{\text{sync}} + (w_{ij}^{\text{async}} - w_{ij}^{\text{sync}}) \\
        & = \sum_{1\le i<j \le h} 2w_{ij}^{\text{sync}} + w_{ij}^\prime \\
        & = \sum_{1\le i<j \le h} 2w_{ij}^{\text{sync}} + \sum_{1\le i<j \le h} w_{ij}^\prime.
    \end{aligned}
\end{equation}
Note that the reformulated MaxCut in Theorem \ref{thm:di-an-co-maxcut} in $\QAOA$ is 
\begin{equation}
    \begin{aligned}
        C^\prime(\bs) & = \sum_{ i < j } \frac{1}{2}\left[(w_{ij}^{\text{sync}} + w_{ij}^{\text{async}}) + (w_{ij}^{\text{sync}} - w_{ij}^{\text{async}})\bs_{i}\bs_{j}\right] \nonumber\\
        & + \sum_i C_i(\bx_i) \\
        & = \sum_{ i < j } w_{ij}^{\text{sync}} + \sum_{ i < j } (w_{ij}^{\text{async}} - w_{ij}^{\text{sync}})(1-\bs_{i}\bs_{j}) \\
        & + \sum_i C_i(\bx_i) \\
        & = \sum_{ i < j }  w_{ij}^\prime(1-\bs_{i}\bs_{j}) + \sum_{ i < j } w_{ij}^{\text{sync}} + C^{\text{inner}}
    \end{aligned}
\end{equation}
where $\sum_{ i < j }  w_{ij}^\prime(1-\bs_{i}\bs_{j})$ is optimized by the MaxCut solver and at least half of $\sum_{ i < j }  w_{ij}^\prime$ is cut. Let $\bs^*$ be the estimated solution, we have 
\begin{equation}
    \label{eq:c5}
    \begin{aligned}
        C^\prime(\bs^*) & \ge \frac{1}{2}\sum_{ i < j } w_{ij}^\prime + \sum_{ i < j } w_{ij}^{\text{sync}} + C^{\text{inner}} \\
        & = \frac{1}{2}\sum_{e \in \mathsf{E}^\text{inter}} w(e) + C^{\text{inner}} \\
        & \ge \frac{1}{2}\sum_{e \in \mathsf{E}^\text{inter}} w(e) + \frac{1}{2}\sum_{e \in \mathsf{E}^\text{inner}} w(e) \\
        & = \frac{1}{2}\sum_{e \in \mathsf{E}} w(e)
    \end{aligned}
\end{equation}
where the first inequality uses the result of reformulated MaxCut, the first equality uses the result of Eq.~(\ref{eq:c3}), the second inequality uses the result of Eq.~(\ref{eq:c4}), and the last equality uses Eq.~(\ref{eq:c1}).
\end{proof}

We end this section by illustrating when the lower bound is achieved, Consider the example of a four-vertex unweighted ring where $\mathsf{V}=\{1,2,3,4\}$ and $\mathsf{E} = \{(1,2),(2,3),(3,4),(1,4)\}$. Suppose we partition it into two subgraphs $\mathsf{G}_1$ and $\mathsf{G}_2$ with $\mathsf{V}_1=\{1,3\}$ and $\mathsf{V}_2=\{2,4\}$. So further if one of the local solutions is $(+1,-1)$, the final cut is 2 and the ratio is $\frac{1}{2}$ no matter how the global solution is merged. A smarter partition will be $\mathsf{V}_1=\{1,2\}$ and $\mathsf{V}_2=\{3,4\}$ where all edges are cut eventually. This case shows that the lower bound can be attained in worst case and effective partition strategy can alleviate this issue.

\section{\label{apx:physics} Relation with Hamiltonian splitting}

Many quantum computing tasks such as QAOA or VQE aims to find an eigenstate (ground state or most excited state) corresponding to a target eigenvalue of a given Hamiltonian $H$. 
We prepare a quantum state $|\psi(\theta)\rangle$ on parameterized circuit and measure it with $H$.
Then we feed $\langle\psi(\theta)| H|\psi(\theta)\rangle$ to an optimizer and update the parameters.
To measure states on qubit quantum computers, this Hamiltonian is represented in terms of Pauli words and each Pauli word can be written as tensor product of Pauli matrices i.e. $H = \sum_k \alpha_k P_k$. 
Thus we have 
\begin{equation}
    \label{eq:e1}
    \langle\psi(\theta)| H|\psi(\theta)\rangle = \sum_k \alpha_k \langle\psi(\theta)| P_k|\psi(\theta)\rangle .
\end{equation}
This allows us to measure each Pauli words individually and add them together.
In QAOA for MaxCut problem, $H$ takes the form of \ref{eq:2} where all Pauli words are tensor products of two Pauli-Z matrices corresponding to edges in problem graph.

In general, an arbitrary Hamiltonian of $n$-qubit has $O(4^n)$ Pauli words so the query of circuit grows exponentially with qubit counts if we simply measure one Pauli word at a time.
Note that two observables can be measured simultaneously if they are commutable. Moreover, if a group of pairwise commuting observables share the same eigenbasis that diagonalizes them all simultaneously, they can be also measured on the same prepared state.
For example in MaxCut problem, all Pauli words share the same eigenbasis (computational basis) so we don't need to measure $O(n^2)$ terms individually but only once. 
One method to reduce circuit query is to split Hamiltonian into several clusters and each cluster is a commuting group of Pauli words we mentioned above.
The less number of clusters, the less of circuit query.
Recent researches mapped Hamiltonian splitting task into MinCliqueCover problem \cite{gokhaleMinimizingStatePreparations2019,verteletskyiMeasurementOptimizationVariational2020}.
Besides, additional speedup can be introduced by distributed quantum computing on multiple untangled quantum computers \cite{duAcceleratingVariationalQuantum2021}.

\begin{figure*}[h]
\centering
\includegraphics[width=0.75\textwidth]{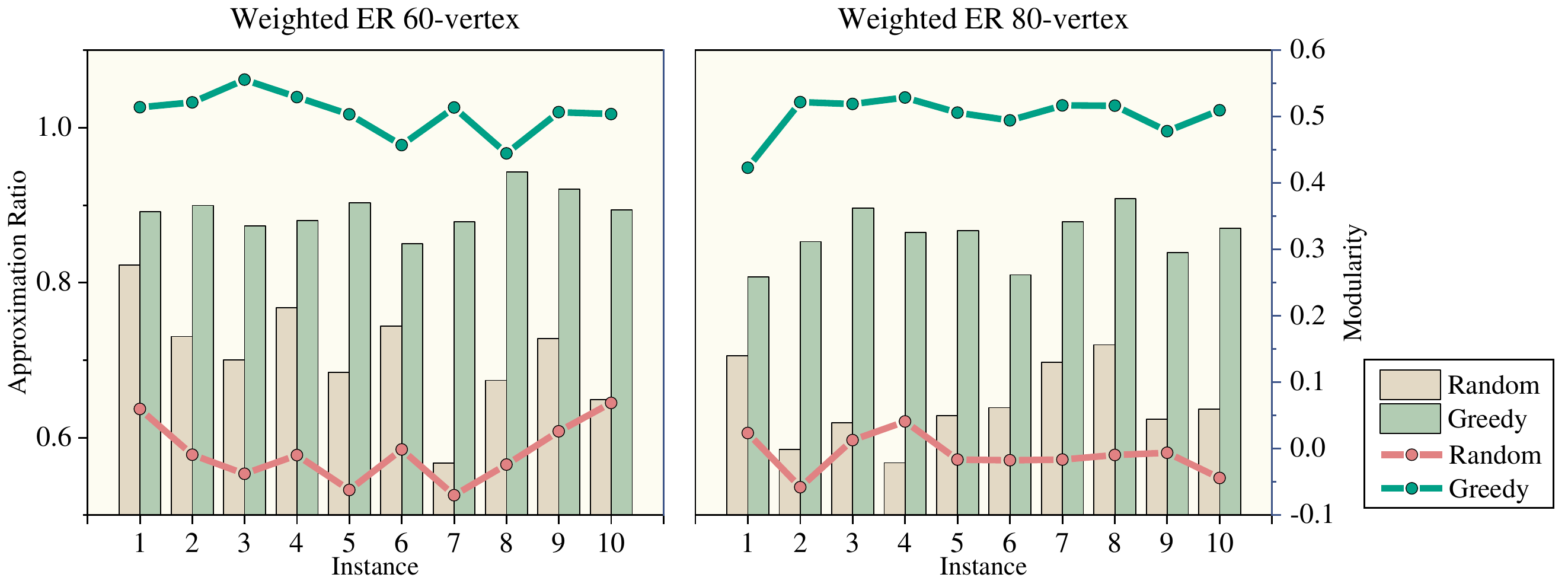}
\caption{ \small{\textbf{Results between random and greedy partition}. } The yellow bars and green bars refer to approximation ratios of two partition strategies. The dotted line refers to the modularity of partition of each graph, which measures the quality of partitioning.}
\label{fig:5}
\end{figure*}

Nevertheless, this does not reduce the required qubit counts on quantum computer. In order to reduce qubit counts, we need to ensure that any two clusters do not share common qubit i.e. Hamiltonian of each cluster cannot be used to measure the same qubit. Since there are always Pauli words with small coefficients which play little role in final $\langle\psi(\theta)| H|\psi(\theta)\rangle$, one can pretend they don't exists when constructing circuit thus partition qubits into several parts.
If so, we can built several small and independent circuit and measure each cluster on distributed quantum computers with less qubits.
The partitioning may follow the property of primal problem such as graph weight or the mutual information between clusters \cite{zhangVariationalQuantumEigensolver2021}.
To minimize the performance loss introduced by partitioning, one can use dressed Hamiltonian in measuring \cite{zhangVariationalQuantumEigensolver2021} or include a fixing step as we discussed in Appendix \ref{apx:aggregate}.

\section{Details of numerical simulations}\label{append:sec:sim-res}

\subsection{Implementation details of $\QAOA$}
\textbf{Implementation details of $\QAOA$.}
The QAOA used in $\QAOA$ are implemented by Pennylane \cite{bergholmPennyLaneAutomaticDifferentiation2020}. In  optimization, $\QAOA$ adopts the vanilla gradient descent method to update the trainable parameters of QAOAs in which the learning rate is set as 0.01. The number of shots of each circuit is set as 1000 to approximate the expectation value of the measurements in calculating gradients. In the process of sampling solutions, $\QAOA$ runs the optimized circuit for $1000$ times to return the same number of bitstrings as solution candidates. To be more specific, in each run, all $n$ qubits are measured by the computational basis along the $Z$ direction and the measured result of each qubit is either +1 or -1. Thus an $n$-dimensional bitstring $z \in \{+1,-1\}^n$ is sampled in one query of the circuit. We collect 1000 such bitstrings as candidates and select the one as solution whose corresponding eigenvalue is the smallest among the candidates.

The source code of $\QAOA$ is available at the Github Repository \url{https://github.com/ZeddTheGoat/QAOA_in_QAQA}.

\textbf{GW algorithm.}  GW maps the primal integer programming to continuous vector space and optimizes it with semidefinte programming. The binary solution is obtained by projecting vectors to a plane. Our SDP solver is implemented by CVXPY \cite{diamondCVXPYPythonEmbeddedModeling2016}, which uses SCS (Splitting Conic Solver)  \cite{odonoghueConicOptimizationOperator2016} following conventions. Here we adopt the default parameters in executing GW, where the max number of iterations is 2500, the convergence tolerance is $1\text{e}^{-4}$,  the relaxation parameter is 1.8, and the balance between minimizing primal and dual residual is 5.0. After optimization, the solution is projected to 100 random vectors and rounded to bitstrings. 

\subsection{More simulation results of $\QAOA$}

\textbf{Performance of $\QAOA$ with the advanced partition methods.} We investigate how the advanced partition method, i.e., the greedy community detection introduced in Appendix \ref{apx:aggregate}, effects the performance of $\QAOA$. To do so, we apply $\QAOA$ to two types of graphs, i.e., w$3$e-$60$ and w$3$e-$80$. To collect the statistical results, we generate 10 instances for each setting. The allowable number of qubits is $n < 15$. The implementation of the community detection method follows the algorithm proposed in \cite{clausetFindingCommunityStructure2004}, which is realized by NetworkX library \footnote{https://github.com/networkx/networkx}. 

\begin{figure*}[h]
\centering
\includegraphics[width=0.98\textwidth]{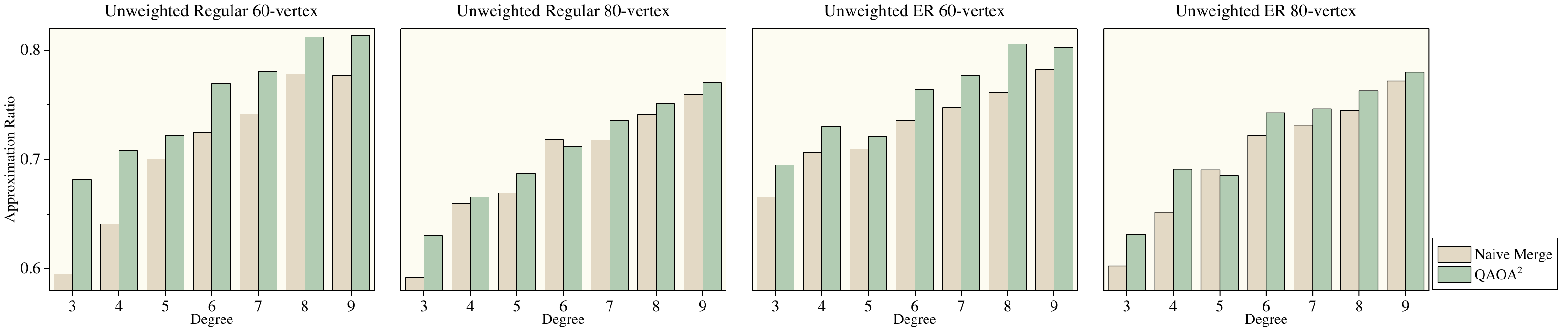}
\caption{ \small{\textbf{Results between Naive merging and $\QAOA$}. } The yellow bars represents the approximation ratio of $\QAOA$, where solutions are merged naively. The green bars represents the approximation ratio of $\QAOA$ introduced in the main text. }
\label{fig:6}
\end{figure*}

 The simulation results are showin in  Fig.~\ref{fig:5}, where the performance of $\QAOA$ is dramatically increased when the random parition is replaced by the community detection. Specifically, for the 7-th instance of w3e-60 and the 4-th instance of w30e-80, the approximation ratio is increased by more than 0.3 compared to the random partition. Uder the measure of the modularities defined in Eq.~\ref{eq:a1}, the achieved results of $\QAOA$ with the community detection strategy are all above 0.4. This indicates good partition, which implies that most edges are kept within subgraphs. By contrast, the modularities of $\QAOA$ with the random partition strategy are about 0. These observations accord with our conjecture in main text that the cut mainly comes from within subgraphs. The advanced partition strategy, e.g., the community detection, ensures the subgraphs as dense as possible and the number of edges left between subgraphs is minimized.

\textbf{Effect of merging optimization.} We next elucidate the importance of recasting the merging process as a new MaxCut problem used in $\QAOA$. Particularly, we conduct an ablation study by evaluating the performance of $\QAOA$ when the merging process is replaced by a naive heuristic, i.e.,  flipping non-local solutions as $\bs=\mathbf{1}$. To do so, we compare these two merging methods on u$d$r-60, u$d$r-80, u$d$e-60 and u$d$e-80 with $d\in [3,9]$ and 10 instances per setting. The number of qubits is set as $n< 10$. The collected results are shown in Fig.~\ref{fig:6}. For almost all settings, $\QAOA$ outperforms the naive approach, except for u6r-80 and u5e-80.  One possible reason is that local solutions is already good enough.

\end{document}